%
%
%
%
%
%
\documentclass[11pt,a4paper]{amsart}       
%
%
\usepackage[utf8]{inputenc}
\usepackage[T1]{fontenc}
\usepackage[english]{babel}
\usepackage{amsmath}
\usepackage{amsfonts}
\usepackage{ae}
\usepackage{units}
\usepackage{icomma}
\usepackage{color}
\usepackage{graphicx}
\usepackage{amsthm}
\usepackage{amssymb}
\usepackage{cancel}
\usepackage{fullpage}
\usepackage{upgreek}
\usepackage{type1cm}
\usepackage{eso-pic}
\usepackage[breaklinks,pdfpagelabels=false]{hyperref}

\newcommand{\bm}[1]{\mbox{\boldmath $ {#1} $}}

\renewcommand{\epsilon}{\varepsilon}

\newtheorem{theorem}{Theorem}[section]
\newtheorem{lemma}[theorem]{Lemma}

\newtheorem{remark}[theorem]{Remark}
\theoremstyle{definition}
\newtheorem{definition}[theorem]{Definition}

\numberwithin{equation}{section}
\numberwithin{theorem}{section}

%
%
%
%
%
\begin{document}

\title{A quadratic lower bound for the convergence rate in the one-dimensional Hegselmann-Krause bounded confidence dynamics
}




\author{Edvin Wedin$^{\dagger}$}
\address{$^{\dagger}$Mathematical Sciences, University of Gothenburg,  41296 Gothenburg, Sweden} 
\email{edvinw@student.chalmers.se}

\author{Peter Hegarty$^{\ddagger}$}
\address{$^{\ddagger}$Mathematical Sciences, Chalmers University of Technology, 41296 Gothenburg, Sweden} 
\email{hegarty@chalmers.se}


\subjclass[2000]{} \keywords{}

\date{\today}

\maketitle

\begin{abstract}
Let $f_{k}(n)$ be the maximum number of time steps taken to reach equilibrium by a system of $n$ agents obeying the $k$-dimensional Hegselmann-Krause bounded confidence dynamics. Previously, it was known that $\Omega(n) = f_{1}(n) = O(n^3)$. Here we show that $f_{1}(n) = \Omega(n^2)$, which matches the best-known lower bound in all dimensions $k \geq 2$.
\end{abstract}

\section{Introduction}\label{sect:intro}

The field of opinion dynamics is concerned with how human agents influence one another in forming opinions, say on social and political issues (though in principle on anything). Mathematical modelling in this area has increased rapidly in recent years, as technology has improved the prospects for running computer simulations. Rigorous results remain rare, however, and mainly confined to the simplest properties of the simplest models. One such simple model which has proven immensely popular is the so-called \emph{Hegselmann-Krause bounded confidence model} (HK-model for brevity). It was introduced in \cite{Kr}, though the paper usually cited is \cite{HK}, which at the time of writing has 935 citations on Google scholar, mostly from non-mathematicians. The model works as follows. We have a finite number $n$ of agents, indexed by the integers $1,\, 2,\dots,\,n$. Time is measured discretely and the opinion of agent $i$ at time $t \in \mathbb{N} \cup \{0\}$ is represented by a real number $x_{t}(i) \in \mathbb{R}$. There is a fixed parameter $r > 0$ such that the dynamics are given by
\begin{equation}\label{eq:update}
x_{t+1}(i) = \frac{1}{|\mathcal{N}_{t}(i)|} \sum_{j \in \mathcal{N}_{t}(i)} x_{t}(j),
\end{equation}
where $\mathcal{N}_{t}(i) = \{j : |x_{t}(j) - x_{t}(i)| \leq r \}$. Thus each agent is only willing to compromise at any time with those whose opinions lie within his so-called \emph{confidence interval}, and he updates to the average of these opinions, including his own. Moreover, the width of this interval, $2r$, is the same for all agents. Since the dynamics are obviously unaffected by rescaling all opinions and the confidence bound $r$ by a common factor, we can assume without loss of generality that $r = 1$. 
\par Two important qualitative features of the HK-model are that agents act synchronously and in a completely deterministic manner. This is in contrast to some other famous opinion dynamics models such as voter models \footnote{http://en.wikipedia.org/wiki/Voter$_{-}$model} or the Deffuant-Weisbuch model \cite{DNAW}. Its popularity is probably due to the simplicity of its formulation, which nevertheless seems ``natural''. Mathematically, it is very tantalising. The update rule (\ref{eq:update}) is linear, but clearly the transition matrix is in general time-dependent, which is the key point. The HK-model has many elegant features which are still either partly understood or have only been observed in simulations. For a more comprehensive survey of the theoretical challenges, see for example the introduction to \cite{WH}. 
\par In this paper, we will focus on one particular question which has been the subject of much attention, namely how long it takes for opinions obeying the HK-dynamics to stabilise. First, some notation and terminology. Let $(x(1),\dots, \, x(n))$ be a configuration of opinions. We say that agents $i$ and $j$ \emph{agree} if $x(i) = x(j)$. A maximal set of agents that agree is called a \emph{cluster}, and the number of agents in a cluster is called its \emph{size}. The configuration is said to be \emph{frozen}{\footnote{Other terms used in the literature are ``in equilibrium'' or ``has converged''. We think our term captures the point with the least possible room for misinterpretation, however.}} if $|x(i) - x(j)| > 1$ whenever $x(i) \neq x(j)$. Clearly, if the configuration is frozen then $x_{t+1}(i) = x_{t}(i)$ for all $i$, and it is easy to see that the converse also holds. 
\par Perhaps the most fundamental result about the HK-dynamics is that any configuration of opinions will freeze in a finite number of time steps, which moreover is universally bounded by a function of the number $n$ of agents only. Indeed, the same is true of a wide class of models including HK as a simple prototype, see \cite{C}. Let $f_1 (n)$ denote the maximum number of time steps taken to freeze by a configuration of $n$ agents obeying (\ref{eq:update}). For the HK-model, the bound given in \cite{C} is $f_1 (n) = n^{O(n)}$. However, it is known that $f_1 (n)$ is bounded by a polynomial function of $n$. The first such bound of $O(n^5)$ was established in \cite{MBCF} and the current record is $O(n^3)$, due to \cite{BBCN}. 
\par Lower bounds for $f_1 (n)$ have received less attention, perhaps due to the difficulty in finding explicit examples of configurations which take a long time to freeze. A natural example to look at is the configuration $\bm{\mathcal{E}}_n = (1,\, 2,\dots, \, n)$, in which opinions are equally spaced with gaps equal to the confidence bound. Thus, agents are placed as far apart as possible to begin with, without being split into two isolated groups. It is not hard to see that, as this configuration updates, if $i < n/2$ then the opinions of agents $i$ and $(n+1)-i$ will remain constant as long as $t < i$, while both will change at $t=i$. Hence, the time taken for the configuration $\bm{\mathcal{E}}_n$ to freeze is at least $n/2$. In fact, this configuration freezes in time $5n/6 + O(1)$, see \cite{HW}. 
\par Thus, $f_1 (n) = \Omega(n)$, an observation that was already made in \cite{MBCF}. 
In this paper, we will prove that $f_{1}(n) = \Omega(n^2)$ by exhibiting an explicit sequence $\bm{\mathcal{D}}_n$ of configurations which take this long to freeze. In fact, we shall abuse notation slightly. Though we could define a suitable configuration for any number $n$ of agents, in order to simplify the appearance of certain formulas we will assume that $n$ is even and let $\bm{\mathcal{D}}_n$ denote a certain configuration on $3n+1$ agents. Our construction basically combines the chain $\bm{\mathcal{E}}_n$ with an example of Kurz \cite{Ku}, and is defined as follows:

\begin{definition}\label{def:config}
Let $n$ be a positive, even integer. The configuration $\bm{\mathcal{D}}_n$ consists of $3n + 1$ agents whose opinions are given by 
\begin{equation}\label{eq:initial}
x(i) = \left\{ \begin{array}{lr} - \frac{1}{n}, & {\hbox{if $1 \leq i \leq n$}}, \\ i - (n+1), & {\hbox{if $n+1 \leq i \leq 2n+1$}}, \\ n + \frac{1}{n}, 
& {\hbox{if $2n+2 \leq i \leq 3n+1$}}. \end{array} \right.
\end{equation}
\end{definition}

The configuration is represented pictorially in Figure \ref{fig:dumbbell}.
\begin{figure*}
  \includegraphics[width=1\textwidth]{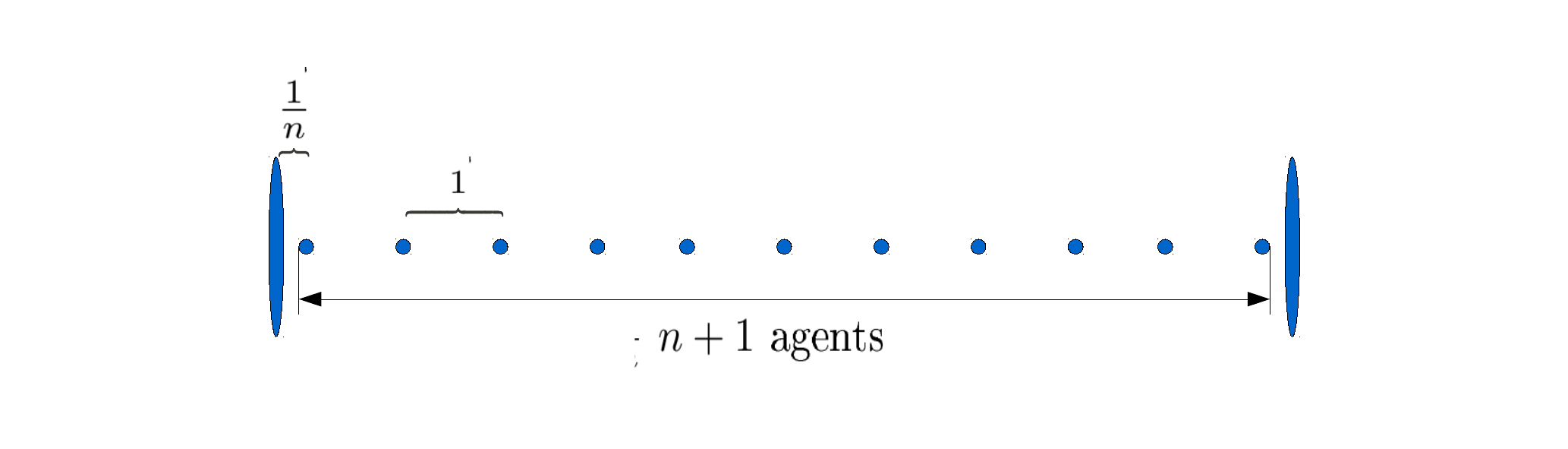} 
 \label{fig:dumbbell}
\caption{Schematic representation of the configuration $\bm{\mathcal{D}}_n$. Each dumbbell has weight $n$.}
\end{figure*}
 It has the shape of a dumbbell. Indeed, someone familiar with the theory of Markov chains might consider this a natural candidate for maximising the freezing time{\footnote{In the general theory of irreducible Markov chains on graphs, dumbbell-like graphs are known to have the longest mixing times. See, for example, \cite{LPW}.}}. There is a subtlety, however. Along the ``bar'' of the dumbbell, opinions are equally spaced at distance one, whereas the two dumbbell clusters themselves are positioned much closer, at distance $1/n$, to the ends of the bar. The latter is what raises the freezing time from $\Theta(n)$ to $\Theta(n^2)$, as will become evident from the proof below. In fact, this is just one of at least three ways of considering our construction as a modification of others previously known which all freeze in linear time. A second way would be to think of it as starting from $\bm{\mathcal{E}}_n$, which freezes in time $O(n)$, and then adding the dumbbells. A third would be to start from the configuration in \cite{Ku}, which consists of the two dumbbells placed at distance $1/n$ from their respective solitary agents, but then without the long intermediate chain{\footnote{In fact, in the Markov chain literature, this configuration is commonly termed a dumbbell, whereas ours would be referred to as a ``dumbbell with a chain in between''. We hope the reader is not confused !}}. Kurz showed that his configuration took time $\Omega(n)$ to freeze and as a by-product of our method, it can be easily shown to freeze in time $O(n)$. 
\par Let us now formally state our result.  

\begin{theorem}\label{thm:main}
The configuration $\bm{\mathcal{D}}_n$ freezes after time $\Omega(n^2)$.
\end{theorem}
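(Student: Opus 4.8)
The plan is to track the motion of the two dumbbell clusters (each of weight $n$) and show that each of them drifts inward toward the bar at an essentially constant rate as long as the bar survives, so that a cluster must traverse a distance of order $1$ before it can ``catch up'' with and merge into the chain. Combined with the fact that the chain itself survives for time $\Omega(n)$ in the relevant region, this forces the whole system to take $\Omega(n^2)$ steps to freeze. The key point is the asymmetry built into Definition~\ref{def:config}: the left dumbbell sits at $-1/n$, at distance $1/n$ from agent $n+1$ (opinion $0$) and at distance $1 - 1/n$ from agent $n+2$ (opinion $1$). As long as the chain near its left end looks locally like the equally-spaced chain $\bm{\mathcal{E}}_n$, the left dumbbell sees only the single agent at the left end of the bar (within distance $\le 1$), so by the update rule its opinion moves by roughly $\frac{1}{n+1}\cdot\frac{1}{n} \approx \frac{1}{n^2}$ per step, i.e.\ a tiny inward drift. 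Symmetrically on the right.

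The steps I would carry out, in order, are: (i) Set up coordinates and note the left-right symmetry $x(i) \leftrightarrow n - x(3n+2-i)$ of $\bm{\mathcal{D}}_n$, which is preserved by the dynamics, so it suffices to analyse the left half. (ii) Prove a ``chain stability'' lemma: for $t$ up to some constant multiple of $n$, the agents forming the bar, $n+1 \le i \le 2n+1$, behave as in the evolution of $\bm{\mathcal{E}}_n$ perturbed only by the negligible pull of the left and right dumbbells; in particular the leftmost bar agent stays within $o(1)$ of $0$ for $\Omega(n)$ steps (here I would invoke the analysis of $\bm{\mathcal{E}}_n$ alluded to in the excerpt, that its endpoints stay fixed until time $\approx i$). (iii) Show that while this holds, the left dumbbell (which remains a single cluster of $n$ agents since its members always agree, being symmetric and seeing the same neighbourhood) has opinion $y_t$ satisfying a recursion of the form $y_{t+1} - y_t = \frac{1}{n+1}(z_t - y_t)$ where $z_t \approx 0$ is the leftmost bar opinion, hence $|y_{t+1} - y_t| = O(1/n^2)$, so $y_t$ cannot increase past, say, $1/2$ until time $\Omega(n^2)$. (iv) Conclude: until the left dumbbell reaches opinion $\approx 1/2$ it is at distance $> 1$ from the second bar agent and more than $1/n$ from the first, so the dumbbell cluster has not merged with anything and the configuration is not frozen; therefore the freezing time is $\Omega(n^2)$.

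The main obstacle is step (ii): making rigorous the claim that the bar behaves like $\bm{\mathcal{E}}_n$ for $\Omega(n)$ steps despite the (small but nonzero, and cumulative) perturbation from the dumbbells, and simultaneously that the dumbbell never sees more than the one endpoint agent of the bar. One has to rule out a feedback loop in which the dumbbell's slow drift, over many steps, moves the endpoint of the bar enough to change which agents lie in the dumbbell's confidence interval, or conversely lets the bar endpoint drift enough to accelerate the dumbbell. The natural device is a bootstrap / induction on $t$: assume inductive bounds of the form ``dumbbell opinion $\le C_1/n$'', ``leftmost bar opinion in $[-C_2/n^2\cdot t,\, C_2/n^2\cdot t]$ or some such'', ``the first several bar gaps are within $C_3 t/n$ of those of $\bm{\mathcal{E}}_n$'', valid for all times up to $\epsilon n$, and propagate them one step using \eqref{eq:update}; the error terms are all $O(1/n)$ smaller than the quantities they perturb, so the induction closes for a suitable small constant $\epsilon$. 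The arithmetic here is routine but must be organised carefully so that the constants are genuinely consistent.

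Finally, a remark on optimality of the bookkeeping: one does not need the chain to survive all the way; it suffices that the leftmost bar agent stays within distance $< 1 - \delta$ of opinion $1$ (equivalently within $\delta$ of $0$) for time $cn$, and that during this window the dumbbell advances by only $O(1/n^2)$ per step, giving total advance $O(c/n) \ll 1$; since the dumbbell must advance $\Theta(1)$ to merge, and it can only do so after the chain in its vicinity has itself collapsed — which by the $\bm{\mathcal{E}}_n$ analysis takes $\Theta(n)$ steps just to begin near the endpoints — iterating this reasoning (the dumbbell gains at most $O(1/n)$ per $\Theta(n)$-length epoch, and there are $\Theta(n)$ epochs before collapse of the full bar) yields the $\Omega(n^2)$ bound. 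I would present this either as the clean bootstrap above or, if that proves delicate, as an epoch-by-epoch amortised argument; the former is cleaner and is what I would attempt first.
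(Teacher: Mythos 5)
Your starting intuition --- that a dumbbell can only creep forward at rate $O(1/n^2)$ per step while it sees a single bar agent --- is correct and is indeed where the $n^2$ comes from, but the argument as organised has a genuine gap, and several of its supporting claims are wrong. First, the bar of $\bm{\mathcal{D}}_n$ does \emph{not} evolve like a small perturbation of $\bm{\mathcal{E}}_n$: in $\bm{\mathcal{E}}_n$ the endpoint agents move by $1/2$ at the very first step (the paper's observation is that agent $i$ is stationary only until time $i$, so the endpoint moves at $t=1$), and the whole chain collapses in time $O(n)$. What keeps the bar of $\bm{\mathcal{D}}_n$ rigid is precisely the pull of the weight-$n$ dumbbell sitting at distance $1/n$ from its end: at $t=0$ agent $n+1$ averages $n$ agents at $-1/n$, itself at $0$ and one agent at $1$, and therefore stays exactly at $0$. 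So the dumbbell's influence is the \emph{dominant} effect at the end of the bar, not a negligible one, and your step (ii) cannot be obtained by citing the $\bm{\mathcal{E}}_n$ analysis. Second, step (iv) miscalculates the margin: the second bar agent sits near opinion $1$, so the dumbbell (starting at $-1/n$) comes within unit distance of it after advancing only $\Theta(1/n)$, not $\Theta(1)$; at $O(1/n^2)$ per step that budget yields only $\Omega(n)$, which is the bound Kurz already obtained for the chainless version of this configuration.

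The missing idea is a quantitative control of the bar's gaps over a window of length $\Theta(n^2)$. What actually happens is that the bar does not collapse at all during this window: every gap between consecutive bar agents stays within $O(1/n)$ of $1$. Your bootstrap, with error bounds accumulating linearly in $t$ (``within $C_3 t/n$'' and the like), closes only up to $t=O(n)$, and the concluding ``epoch'' argument asserts without justification that there are $\Theta(n)$ epochs; no invariant is proposed that survives from one epoch to the next, and the underlying picture (a bar that is gradually eaten from the ends) is not what occurs. To pass from $\Omega(n)$ to $\Omega(n^2)$ one must use the fact that the $O(1/n^2)$-per-step perturbation injected at the ends of the bar propagates \emph{diffusively}: writing the recursion for the vector of gaps as $\bm{y}_{t+1}=M\bm{y}_t$, the deviation of the first bar gap from $1$ after $t$ steps is governed by the expected number of returns to the origin of a lazy random walk on a path, which is $O(\sqrt{t})$ for $t\leq n^2$; hence the accumulated deviation is $O(\sqrt{t}/n^2)$ and stays below the critical $1/n$ threshold until $t=\Theta(n^2)$. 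This square-root gain (Lemmas \ref{lem:walk} and \ref{lem:recurrence} of the paper) is the heart of the proof and is absent from your proposal.
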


The proof will be given in the next section. One important feature of our result is that it matches the best-known lower bound for the freezing time of the multi-dimensional HK-model. The latter refers to the fact that rule (\ref{eq:update}) makes sense if opinions $x_{t} (i)$ are considered as vectors in $\mathbb{R}^k$ for any fixed $k$ and neighborhoods $\mathcal{N}_{t}(i)$ are defined with respect to Euclidean distance. The sociological interpretation would be that there are $k$ ``issues'', and that agents will compromise if and only if their opinions are sufficiently close on \emph{all} issues. Let $f_{k}(n)$ denote the maximum number of time steps taken to freeze by a configuration of $n$ agents with opinions in $\mathbb{R}^k$ and obeying (\ref{eq:update}). It turns out that $f_{k}(n)$ is bounded by a universal polynomial function of $n$ and $k$. This was also established in \cite{BBCN}, who gave the bound $f_{k}(n) = O(n^{10} k^{2})$. Note, though, that this is much worse than the best bound $O(n^3)$ in one dimension. Indeed, the proof of the latter in \cite{BBCN} uses a different argument which does not seem to generalise to higher dimensions{\footnote{An important fact which makes the one-dimensional model much simpler to analyse is that, as soon as an agent becomes isolated, he will remain so forever. This is not always the case in higher dimensions. As an example in $\mathbb{R}^2$, consider three agents $a, \, b, \, c$ initially placed at $(0, \, -0.5), \, (0, \, 0.5)$ and $(1, \, 0)$ respectively. At $t=0$, only $a$ and $b$ will interact, but this first interaction will bring them both to $(0, \, 0)$ where they are close enough to $c$ to interact at $t=1$.}}.
\par Already in two dimensions, however, a quadratic lower bound was also proven in \cite{BBCN}. Their example, which we denote $\bm{\mathcal{F}}_n$, places the $n$ agents at the vertices of a regular $n$-gon of side-length one, and they show that the system requires at least $n^{2}/28$ steps to freeze. \footnote{By symmetry, it is clear that all agents will end up in agreement in this case.} The configuration $\bm{\mathcal{F}}_n$ seems, at least in hindsight, like a natural ``two-dimensional version'' of $\bm{\mathcal{E}}_n$. It is not really clear how far one can push this idea, however, as the upper bound of $O(n^{10} k^2)$ for all dimensions makes immediately clear. Indeed, there is no example known in dimensions $k \geq 3$ which takes longer to freeze than $\bm{\mathcal{F}}_n$, now considered as a configuration on a plane in $\mathbb{R}^k$. The configurations $\bm{\mathcal{D}}_n$ discussed in this paper are also quite different from the $\bm{\mathcal{F}}_n$. 
\\
\par We finish this section by giving some more fairly standard terminology to be used below.  Let $(x(1),\dots, \, x(n))$ be a configuration of one-dimensional opinions, obeying the convention that $x(i) \leq x(j)$ whenever $i \leq j$. We can define a {\em receptivity graph} $G$, whose nodes are the $n$ agents and where an edge is placed between agents $i$ and $j$ whenever 
$|x(i) - x(j)| \leq 1$. We say that agents $i$ and $j$ are \emph{connected} if they are in the same connected component of the receptivity graph. Observe that every connected component of $G$ is an interval of agents and that $i$ is disconnected from $i+1$ if and only if $x(i+1) > x(i) + 1$.

\section{Proof of Theorem \ref{thm:main}}\label{sect:pfmain}

\begin{lemma}\label{lem:walk}
Let $n \geq 2$ and let $\mathcal{P}_n$ denote the path on $n$ vertices, indexed from left-to-right by the integers $1,\dots, \, n$. Let $X_0, \, X_1, \dots$ be a random walk on $\mathcal{P}_n$ with transition probabilities $p_{i,\,j}$ given by
\begin{equation}\label{eq:transition}
p_{i, \, j} = \left\{ \begin{array}{lr} 2/3, & {\hbox{if $(i,\,j) = (1,\, 1)$ or $(n,\,n)$}}, \\ 1/3, & {\hbox{otherwise and if $|i - j| \leq 1$}}, \\ 0, & {\hbox{otherwise.}} \end{array} \right.
\end{equation}
For any $i, \, j$ and $t \geq 0$, let $h_{i,\, j}(t)$ denote the expected number of times a walk started at $i$ will hit $j$ up to and including time $t$, i.e.:
\begin{equation*}\label{eq:hits}
h_{i,\, j}(t) = \mathbb{E} [\# s: X_s = j, \, 0 \leq s \leq t \; | \; X_0 = i].
\end{equation*}
Then $h_{1, \, 1}(t) \leq c_1 \cdot \sqrt{t}$ for all $1 \leq t \leq n^2$, where 
$c_1 > 0$ is an absolute constant, independent of $n$. 
\end{lemma}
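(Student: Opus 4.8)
The plan is to estimate the Green's function $h_{1,1}(t)$ for this lazy, reflecting random walk on the path by comparing it with a walk on all of $\Z$ and exploiting the classical local central limit theorem. First I would observe that the walk has a simple structure: at an interior vertex it moves left, right, or stays put, each with probability $1/3$, and at the two endpoints the ``would-be exit'' move is converted into a lazy step, so the endpoint self-loop has probability $2/3$. This is precisely a reflecting boundary condition, and the stationary measure is uniform on the $n$ vertices (the chain is reversible with respect to counting measure, as one checks from $p_{i,j} = p_{j,i}$ for $i \neq j$). The key point is that for $t \le n^2$ the walk started at the endpoint $1$ has not had time to ``feel'' the far endpoint $n$ in a way that would inflate the return probability; the relevant comparison is with a half-line walk.

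The main step is a one-step return-probability bound: I claim $\mathbb{P}(X_s = 1 \mid X_0 = 1) \le c/\sqrt{s+1}$ for all $s \ge 0$, with $c$ absolute. To prove this I would use the reflection principle. Consider the lazy walk $\tilde X_s$ on all of $\Z$ with increments $-1, 0, +1$ each with probability $1/3$; by reflecting the path across the ``wall'' between $0$ and $1$ (i.e.\ identifying the reflecting walk on $\{1,2,\dots\}$ with $|\tilde X_s - \tfrac12| + \tfrac12$, or more cleanly folding at the point $1$), the probability that the reflecting walk is at $1$ at time $s$ is at most twice the probability that the free walk $\tilde X_s$ on $\Z$ is at a fixed point. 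For the free lazy walk, the local CLT (or a direct Stirling/Fourier estimate on $\sum_{a+b\le s} \binom{s}{a,b} 3^{-s}$, the probability of ending at a given site) gives $\mathbb{P}(\tilde X_s = k) \le c'/\sqrt{s+1}$ uniformly in $k$, since the characteristic function $\hat p(\theta) = \tfrac13(1 + 2\cos\theta)$ satisfies $|\hat p(\theta)| \le e^{-c''\theta^2}$ on $[-\pi,\pi]$. Since the finite-path walk and the half-line reflecting walk agree up to the first time the path walk reaches $n$, and for $t \le n^2$ one still needs to control the discrepancy, I would argue that reflecting off the far end can only be handled by a second reflection (folding the interval $\{1,\dots,n\}$ into $\Z$ via the standard ``unfolding'' that represents a doubly-reflected walk as a sum over images); this gives $\mathbb{P}(X_s = 1 \mid X_0 = 1) \le \sum_{m \in \Z} \mathbb{P}(\tilde X_s = 2mn \text{ or } 2mn+\text{something})$, and for $s \le n^2$ the $m=0$ term dominates while the tail $\sum_{|m|\ge 1} e^{-c'' (mn)^2/s}$ is bounded by an absolute constant times the main term. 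Hence $\mathbb{P}(X_s = 1 \mid X_0 = 1) \le c/\sqrt{s+1}$ for $1 \le s \le n^2$.

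Summing this over $s$ from $0$ to $t$ then yields
\begin{equation*}
h_{1,1}(t) = \sum_{s=0}^{t} \mathbb{P}(X_s = 1 \mid X_0 = 1) \le 1 + \sum_{s=1}^{t} \frac{c}{\sqrt{s+1}} \le 1 + c\int_{0}^{t} \frac{\rd u}{\sqrt{u}} = 1 + 2c\sqrt{t} \le c_1 \sqrt{t}
\end{equation*}
for all $1 \le t \le n^2$, with $c_1$ an absolute constant (using $\sqrt{t} \ge 1$ to absorb the additive $1$). This completes the argument.

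I expect the main obstacle to be making the ``the far endpoint does not matter for $t \le n^2$'' step fully rigorous: one must be careful that the doubly-reflected walk's image sum really is dominated by its central term on the whole range $s \le n^2$ (at $s = n^2$ the next image at distance $\sim 2n = 2\sqrt{s}$ contributes a constant fraction, not a negligible one, so the bound is tight up to constants — which is exactly why the lemma is stated with an absolute constant $c_1$ and the range $t \le n^2$ rather than for all $t$). An alternative, possibly cleaner route that avoids image sums is a direct spectral estimate: diagonalize the $n \times n$ transition matrix (its eigenvalues are $\tfrac13(1 + 2\cos(\pi k/n))$, $k = 0,\dots,n-1$, with explicit cosine eigenvectors), write $\mathbb{P}(X_s = 1\mid X_0 = 1) = \tfrac1n \sum_k \cos^2(\cdot)\,\lambda_k^s$, and bound the sum by comparison with $\int_0^\pi e^{-c\theta^2 s}\,\rd\theta \asymp 1/\sqrt{s}$ for $s \le n^2$; the restriction $t \le n^2$ enters because for larger $t$ the $k=0$ term (stationary part, value $1/n$) would start to dominate and the bound $c_1\sqrt t$ would fail. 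Either way, the estimate is routine once the correct comparison is set up; the conceptual content is simply that a one-dimensional diffusive walk returns to a reflecting endpoint with probability $\Theta(1/\sqrt s)$ until it equilibrates at time $\Theta(n^2)$.
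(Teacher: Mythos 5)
Your proposal is correct and follows essentially the same route as the paper: the authors likewise remove the reflecting boundaries by unfolding (they realise the path walk as the folding of the uniformly lazy walk on the cycle $\mathcal{C}_{2n}$, which is the finite-cycle form of your image sum), prove the pointwise return-probability bound $O(1/\sqrt{s})$ by noting that the admissible clockwise-step counts are spaced at least $n \geq \sqrt{t}$ apart so that only the central binomial term contributes up to a constant, and then sum over $s$. The only differences are cosmetic: where you invoke the local CLT via the characteristic function, they prove the required binomial-coefficient decay by hand (their Claim 2), and they work on the finite cycle rather than on $\mathbb{Z}$ with an explicit image sum.
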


\begin{proof}
This result surely follows from standard textbook facts about random walks on graphs, but since we cannot point to a reference for the precise result, we shall outline a proof in any case.
\par Let us consider instead a cycle $\mathcal{C}_{2n}$ of length $2n$, with vertices indexed clockwise by $1,\, 2, \dots, \, 2n$, and a random walk on the cycle for which the transition probabilities are $p^{\prime}_{i, \, j} = 1/3$ if $|i - j| \, ({\hbox{mod $2n$}}) \leq 1$ and $p^{\prime}_{i, \, j} = 0$ otherwise. Let $h^{\prime}_{i, \, j}(t)$ denote the expected number of times a walk on $\mathcal{C}_{2n}$ started at node $i$ hits node $j$ up to and including time $t$. 
\\
\\
{\sc Claim 1:} (i) $h^{\prime}_{1, \, 2n}(t) \leq h^{\prime}_{1, \, 1}(t)$.
\\
(ii) $h_{1, \, 1}(t) = h^{\prime}_{1, \, 1}(t) + h^{\prime}_{1, \, 2n}(t) \leq 2 h^{\prime}_{1, \, 1}(t)$. 
\\
\\
To prove (i) first note that, by the symmetry of the transition rules on the cycle, the function $h^{\prime}_{i, \, i}(t)$ is independent of $i$. Let $\tau$ be the random time at which a walk started at $1$ first hits $2n$. Then 
\begin{eqnarray*}\label{eq:hineq}
h^{\prime}_{1, \, 2n}(t) = \sum_{s = 0}^{t} \mathbb{P} (\tau = s) \, \cdot \, h^{\prime}_{2n, \, 2n}(t-s) = \sum_{s=0}^{t} \mathbb{P} (\tau = s) \, \cdot \, h^{\prime}_{1, \, 1}(t-s) \leq \\ \leq
\sum_{s=0}^{t} \mathbb{P}(\tau = s) \, \cdot \, h^{\prime}_{1, \, 1}(t) \leq 
h^{\prime}_{1, \, 1}(t),
\end{eqnarray*}
where we have used the obvious fact that the functions $h^{\prime}_{i, \, j}(t)$ are all non-decreasing in $t$.
\par The right-hand inequality in (ii) follows from (i). For the left-hand equality, we identify the nodes of $\mathcal{C}_{2n}$ in pairs as 
\begin{equation*}\label{eq:pairing}
v_1 = \{1, \, 2n\}, \;\; v_2 = \{2, \, 2n-1\}, \;\; \dots, \;\; v_n = \{n, \, n+1\}.
\end{equation*}
A random walk on $\mathcal{C}_{2n}$ can be identified with a random walk on the path $\mathcal{P}_n$ whose vertices from left-to-right are $v_1,\dots,\, v_n$, where any step in the former which remains inside the same subset $v_i$ is considered as standing still at the same vertex in the latter. It is also easy to see that if the transition probabilities on the cycle are $p^{\prime}_{i, \, j}$, then on the path they become $p_{i, \, j}$. The equality in (ii) follows immediately from these observations. 
\\
\par
By Claim 1, it suffices to prove that $h^{\prime}_{1, \, 1}(t) = O(\sqrt{t})$ for all $1 \leq t \leq n^2$. We go one step further. Let $q(t)$ denote the probability that the walk on $\mathcal{C}_{2n}$, started at node $1$, is also at node $1$ at time $t$. By linearity of expectation, it suffices to prove that $q(t) = O(1/\sqrt{t})$ for all $1 \leq t \leq n^2$. 
\par So fix a time $t \geq 1$. Any walk consists of steps of three types: clockwise, anticlockwise and standing still. The walk will be back at node $1$ at time $t$ if and only if the numbers of clockwise and anticlockwise steps among the first $t$ steps are congruent modulo $2n$. The expected number of standing still steps is $t/3$ and, up to an error of order $e^{- \alpha t}$, where $\alpha > 0$ is an absolute constant, we can ignore all walks where the number of standing still steps is greater than $t/2$ say. Conditioned on the number $l$ of such steps and their timings, there are $2^{t-l}$ possible walks. The number of these which 
have $c$ clockwise steps is $\binom{t-l}{c}$, which is less than $\frac{2^{t-l}}{\sqrt{t-l}}$ for any $c$ and maximised at $c = \lfloor \frac{t-l}{2} \rfloor$. Since we're assuming $l \leq t/2$, it follows that every binomial coefficient is less than $2^{t-l} \sqrt{\frac{2}{t}}$. The ones that contribute to $q(t)$ are those such that $2c \equiv t-l \; ({\hbox{mod $2n$}})$. The gap between any two such values of $c$ is at least $n$ which, since $t \leq n^2$, is at least $\lceil \sqrt{t} \, \rceil$. 
\\
\\
{\sc Claim 2:} There is a real number $\kappa \in (0, \, 1)$ such that, for all integers $m \geq 2$ and $r \geq 1$, 
\begin{equation}\label{eq:binom}
\binom{m}{\lfloor m/2 \rfloor + r \lceil \sqrt{m} \, \rceil} \leq \kappa^r \binom{m}{\lfloor m/2 \rfloor}.
\end{equation}
Once again, we will prove this directly, rather than appealing to some textbook fact. For $0 \leq k < m$, let $f(m, \, k) := \binom{m}{k+1} / \binom{m}{k} = \frac{m-k}{k+1}$. The function $f(m, \, k)$ is decreasing in $k$ as long as $k \geq \lfloor m/2 \rfloor$, thus it suffices to prove (\ref{eq:binom}) for $r = 1$. If we put $k = \lfloor m/2 \rfloor + \lfloor \frac{1}{2} \sqrt{m} \rfloor$ then, for sufficiently large $m$, $f(m, \, k) \leq 1 - \frac{1}{\sqrt{m}}$. Thus, for sufficiently large $m$,  
\begin{equation}\label{eq:ugly}
\frac{\binom{m}{\lfloor m/2 \rfloor + \lceil \sqrt{m} \, \rceil}}{\binom{m}{\lfloor m/2 \rfloor}} = \prod_{j = 1}^{\lceil \sqrt{m} \, \rceil} f(m, \, \lfloor m/2 \rfloor + j)  
\leq \left( 1 - \frac{1}{\sqrt{m}} \right)^{\frac{1}{2} \sqrt{m}} \leq e^{-1/2}.
\end{equation}
So, for $m$ sufficiently large, (\ref{eq:binom}) holds with $\kappa = e^{-1/2}$ Hence it holds for some $\kappa < 1$ and all $m \geq 2$, since for all such $m$, the first quotient in (\ref{eq:ugly}) is strictly less than one. This proves 
Claim 2.
\\
\par
Claim 2 implies that, conditioned on $l$, the contributions to $q(t)$ from different values of $c$ decrease exponentially as one moves away from $\lfloor \frac{t-l}{2} \rfloor$, and hence the total contribution is bounded by an absolute constant times the largest one which, as previously stated, is at most $\sqrt{\frac{2}{t}}$. Unwinding our argument, what we have shown is that, provided $1 \leq t \leq n^2$ and conditioning on the number and timing of all standing still steps up to time $t$, the probability of the walk being back at node $1$ is $O(1/\sqrt{t}) + O(e^{- \alpha t}) = O(1/\sqrt{t})$. Hence, $q(t) = O(1/\sqrt{t})$, as desired.  
\end{proof}

\begin{lemma}\label{lem:recurrence}
Let $n \in \mathbb{N}$, $\kappa \in \mathbb{Q}_{> 0}$ and, for $t \geq 0$, let $\bm{\delta}_t = (\delta_{1,\, t}, \dots, \, \delta_{n, \, t})$ be a sequence of vectors in $\mathbb{Q}_{\geq 0}^{n}$ defined recursively as follows:
\begin{eqnarray*}
\bm{\delta}_0 = (0, \dots, \, 0), \label{eq:deltazero} \\
\delta_{1, \, t+1} = \kappa + \frac{2}{3} \delta_{1, \, t} + \frac{1}{3} \delta_{2, \, t}, \label{eq:deltaone} \\
\delta_{n, \, t+1} = \kappa + \frac{2}{3} \delta_{n, \, t} + \frac{1}{3} \delta_{n-1, \, t}, \label{eq:deltan} \\
\delta_{i, \, t} = \frac{1}{3} \left( \delta_{i-1, \, t} + \delta_{i, \, t} + \delta_{i+1, \, t} \right), \;\;\; \forall \; 2 \leq i \leq n-1. \label{eq:otherdeltas}
\end{eqnarray*}
Then there is an absolute constant $c_2 > 0$ such that $\delta_{i, \, t} \leq c_2 \cdot \kappa \cdot \sqrt{t}$ for all $i$ and all $t \leq n^2$.
\end{lemma}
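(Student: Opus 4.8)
The plan is to recognise the recursion defining $\bm{\delta}_t$ as a linear iteration driven by the transition matrix of the random walk in Lemma \ref{lem:walk}, and then to read off $\delta_{i,\,t}$ directly in terms of the expected hitting counts $h_{i,\,j}$ estimated there. (I read the last displayed relation in the statement as $\delta_{i,\,t+1} = \frac13(\delta_{i-1,\,t} + \delta_{i,\,t} + \delta_{i+1,\,t})$, the subscript $t$ on the left-hand side being a typo.) Let $P = (p_{i,\,j})$ be the $n \times n$ matrix of transition probabilities from (\ref{eq:transition}), viewed as acting on column vectors, and let $\mathbf{e}_1, \mathbf{e}_n \in \mathbb{Q}^n$ be the corresponding standard basis vectors. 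Comparing the four defining relations with the rows of $P$, one sees immediately that
\[
\bm{\delta}_{t+1} = \kappa(\mathbf{e}_1 + \mathbf{e}_n) + P\,\bm{\delta}_t, \qquad \bm{\delta}_0 = \mathbf{0},
\]
the inhomogeneous term being supported on the two endpoints precisely because $\kappa$ is added only in rows $i = 1$ and $i = n$.

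Next I would unroll this recursion to get $\bm{\delta}_t = \kappa \sum_{s=0}^{t-1} P^s(\mathbf{e}_1 + \mathbf{e}_n)$. Since $(P^s)_{i,\,j} = \mathbb{P}(X_s = j \mid X_0 = i)$, summing over $s$ identifies $\sum_{s=0}^{t-1} (P^s)_{i,\,j}$ with the hitting count $h_{i,\,j}(t-1)$, so componentwise
\[
\delta_{i,\,t} = \kappa\bigl( h_{i,\,1}(t-1) + h_{i,\,n}(t-1) \bigr), \qquad 1 \leq i \leq n,\ t \geq 1.
\]
It therefore suffices to bound $h_{i,\,1}(t)$ and $h_{i,\,n}(t)$ by $O(\sqrt{t})$ uniformly in the starting point $i$, for $1 \leq t \leq n^2$.

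For that I would use two elementary observations. First, a first-passage decomposition identical to the one used to prove Claim 1(i) inside Lemma \ref{lem:walk}: if $\tau$ is the first time a walk started at $i$ hits $1$, then $h_{i,\,1}(t) = \sum_{s=0}^{t} \mathbb{P}(\tau = s)\, h_{1,\,1}(t-s) \leq h_{1,\,1}(t)$, using that $h_{1,\,1}$ is non-decreasing and $\sum_s \mathbb{P}(\tau = s) \leq 1$; symmetrically $h_{i,\,n}(t) \leq h_{n,\,n}(t)$. Second, the reflection $j \mapsto n+1-j$ is an automorphism of the weighted path $\mathcal{P}_n$, so $h_{n,\,n}(t) = h_{1,\,1}(t)$. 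Combining these with Lemma \ref{lem:walk} gives $h_{i,\,1}(t),\, h_{i,\,n}(t) \leq h_{1,\,1}(t) \leq c_1 \sqrt{t}$ for all $i$ and all $1 \leq t \leq n^2$. Substituting into the formula above yields $\delta_{i,\,t} \leq 2 c_1 \kappa \sqrt{t-1} \leq 2 c_1 \kappa \sqrt{t}$ for $2 \leq t \leq n^2$; the cases $t = 0$ (where $\bm{\delta}_0 = \mathbf{0}$) and $t = 1$ (where each $\delta_{i,\,1} \in \{0, \kappa\}$) are trivial after enlarging the constant, so the lemma holds with $c_2 = 2c_1$.

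As for the main obstacle: there is really no deep difficulty here, since the substantive estimate — the $O(\sqrt{t})$ bound on return counts — is already supplied by Lemma \ref{lem:walk}. The only points demanding care are the bookkeeping that turns the recursion into the matrix iteration (noticing the source term lives only at the two endpoints), the off-by-one shift $t \mapsto t-1$ in the time parameter, and checking that the time range $t \leq n^2$ is exactly the range covered by Lemma \ref{lem:walk}.
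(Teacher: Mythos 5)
Your proposal is correct and follows essentially the same route as the paper: both write the recursion as $\bm{\delta}_{t+1} = \kappa(\mathbf{e}_1+\mathbf{e}_n) + P\bm{\delta}_t$, unroll it into a sum of powers of $P$, identify the entries with the hitting counts $h_{i,\,j}$, and reduce to Lemma \ref{lem:walk} via the first-passage decomposition. The only cosmetic difference is that the paper first uses the symmetry and monotonicity of the $\delta_{i,\,t}$ to reduce to bounding $\delta_{1,\,t}$ alone, whereas you bound every component directly by showing $h_{i,\,1}(t) \leq h_{1,\,1}(t)$ uniformly in $i$; both steps rest on the same argument.
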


\begin{proof}
For any $t$, it is clear that $\delta_{i, \, t} = \delta_{(n+1)-i, \, t}$ and that $\delta_{i, \, t} \geq \delta_{i+1, \, t}$ for all $i < n/2$. It thus suffices to prove that $\delta_{1, \, t} = O(\kappa \sqrt{t})$ for all $t \leq n^2$. 
\par The recursion can be written in matrix form as 
\begin{eqnarray}\label{eq:matrixrecurrence}
\bm{\delta}_0 = \bm{0}, \label{eq:mxdeltazero} \\
\bm{\delta}_{t+1} = \bm{v} + P \cdot \bm{\delta}_t, \label{eq:mxdelta}
\end{eqnarray}
where $\bm{v} = (\kappa, \, 0, \, 0, \dots, \, 0, \, \kappa)^T$ and 
$P = (p_{i, \, j})$ is the transition matrix of (\ref{eq:transition}). 
It follows easily from (\ref{eq:mxdeltazero}) and (\ref{eq:mxdelta}) that, for any $t > 0$, 
\begin{equation*}\label{eq:matrixdeltat}
\bm{\delta}_t = (I + P + \cdots + P^{t-1}) \bm{v}.
\end{equation*}
Hence, 
\begin{equation}\label{eq:deltaandh}
\delta_{1, \, t} = \kappa \cdot (h_{1,\,1}(t) + h_{1,\,n}(t)) \leq 2\kappa \cdot h_{1, \, 1}(t),
\end{equation}
where the last inequality can be proven in a similar manner to part (i) of Claim 1 in the proof of Lemma \ref{lem:walk}. Hence, Lemma \ref{lem:recurrence} follows from (\ref{eq:deltaandh}) and Lemma \ref{lem:walk}.  
\end{proof}

\begin{proof} \emph{of Theorem \ref{thm:main}.} For simplicity (see (\ref{eq:Mrest}) below), we assume $n \geq 3$. Let $\bm{x}_0 = \mathcal{D}_n \in \mathbb{R}^{3n+1}$ and for all $t > 0$ let the updates $\bm{x}_t = (x_t (1), \dots, \, x_t (3n+1))$ be generated according to (\ref{eq:update}). So $\bm{x}_t$ represents the positions of the agents at time $t$. We will find it more convenient to work instead with the vectors of gaps $\bm{y}_t = (y_{0,\, t}, \dots, \, y_{n+1, \, t})\in \mathbb{R}^{n+2}$ given by
\begin{equation*}\label{eq:gaps}
y_{i, \, t} = x_t (n+1+i) - x_t (n+i), \;\;\; 0 \leq i \leq n+1.
\end{equation*}
Observe that $\bm{y}_0 = \left( \frac{1}{n}, \, 1, \dots, \, 1, \, \frac{1}{n} \right)$. Let $G_t$ denote the receptivity graph at time $t$. For as long as $G_t = G_0$, it is easily checked that $\bm{y}_{t+1} = M \cdot \bm{y}_t$ where $M = (m_{i,\,j})$ is an $(n+2) \times (n+2)$ matrix whose upper left $2 \times 3$ block is 
\begin{equation*}\label{eq:Mblock}
\left( \begin{array}{ccc} \frac{n}{(n+1)(n+2)} & \frac{1}{n+2} & 0 \\ \frac{n}{n+2} & \frac{2n+1}{3(n+2)} & \frac{1}{3} \end{array} \right),
\end{equation*}
which is symmetric about its midpoint, i.e.:
\begin{equation*}\label{eq:Msymmetry}
m_{i,\,j} = m_{(n+3)-i, \, (n+3)-j}
\end{equation*}
and which, for $3 \leq i \leq n$, satisfies
\begin{equation}\label{eq:Mrest}
m_{i,\,j} = \left\{ \begin{array}{lr} 1/3, & {\hbox{if $|i-j| \leq 1$}}, \\
0, & {\hbox{otherwise}}. \end{array} \right.
\end{equation}
We define auxiliary vectors $\bm{\delta}_t = (\delta_{0, \, t}, \dots, \, \delta_{n+1, \, t})$ as follows:
\begin{eqnarray}
y_{i, \, t} =: \frac{1}{n} - \frac{\delta_{i,\, t}}{n^2}, \;\;\;\; {\hbox{if $i = 0$ or $i = n+1$}}, \label{eq:ydeltazero} \\
y_{i, \, t} =: 1 - \frac{\delta_{i, \, t}}{n^2}, \;\;\;\; {\hbox{for $1 \leq i \leq n$}}. \label{eq:ydeltarest}
\end{eqnarray}
Observe that $\bm{\delta}_0 = \bm{0}$ and $\delta_{i, \, t} = \delta_{(n+1)-i, \, t}$ for all $i$ and $t$. As long as $G_t = G_0$ one checks that the following recursion is satisfied:
\begin{eqnarray}
0 \leq \delta_{0, \, t+1} \leq 1 + \frac{1}{n} \left( \delta_{0, \, t} + \delta_{1, \, t} \right), \label{eq:recurdeltazero} \\
0 \leq \delta_{1, \, t+1} \leq \delta_{0, \, t} + \frac{2}{3} \delta_{1, \, t} + 
\frac{1}{3} \delta_{2, \, t}, \label{eq:recurdeltaone} \\
0 \leq \delta_{i, \, t+1} = \frac{1}{3} \left( \delta_{i-1, \, t} + \delta_{i, \, t} + \delta_{i+1, \, t} \right) \;\;\; {\hbox{for $2 \leq i \leq n-1$}}. 
\label{eq:recurdeltarest}
\end{eqnarray}
Applying Lemma \ref{lem:recurrence} with $\kappa = 2$ it is easy to deduce that, for some absolute constant $c_3 > 0$ and all $t \leq c_3 \cdot n^2$, the solution to (\ref{eq:recurdeltazero})-(\ref{eq:recurdeltarest}) with initial condition $\bm{\delta}_0 = \bm{0}$
will satisfy
\begin{equation*}\label{eq:upperbounds}
\delta_{0, \, t} \leq 2, \;\; \delta_{n+1, \, t} \leq 2, \;\;\;\;\;\; \delta_{i, \, t} < n-2 \;\; {\hbox{for $1 \leq i \leq n$}}. 
\end{equation*}
But this in turn implies, from (\ref{eq:ydeltazero}) and (\ref{eq:ydeltarest}), that $y_{i, \, t} + y_{i+1, \, t} > 1$ for all $0 \leq i \leq n$ and all $t \leq c_3 \cdot n^2$, hence indeed it is true that $G_t = G_0$ for all such $t$. In particular, agent $n+2$ will not be visible to the cluster on the left before time $c_3 \cdot n^2$, which proves that the configuration will take at least this long to freeze.
\end{proof}
 
\begin{remark} 
One can prove that the configuration does indeed freeze in time $\Theta(n^2)$. First, we can turn the above argument around somewhat and deduce instead from the above relations that $\delta_{0, \, t} \geq 1/2$ for all $t > 0$ and hence, instead of (\ref{eq:recurdeltaone}), that 
\begin{equation*}\label{eq:lowerbounds}
\delta_{1, \, t+1} \geq \frac{1}{4} + \frac{2}{3} \delta_{1, \, t} + \frac{1}{3} \delta_{2, \, t}.
\end{equation*}
The argument in Lemma \ref{lem:recurrence} can then be turned on its head to deduce that $\delta_{1, \, t} = \Omega(h_{1,1}(t))$, while it is almost trivial that $h_{1,1}(t) = \Omega(\frac{t}{n})$. What all of this implies is that agent $n+2$ will indeed become visible to the cluster on the left at time $t^{*} = \Theta(n^2)$, and it will then immediately disconnect from agent $n+3$. We then just need to consider the subsequent evolution of the chain $\mathcal{C}$ of agents $n+3, \dots, \, 2n-2$. Since $\delta_{i, \, t^{*}} = O(n)$ for every $i$, it follows from (\ref{eq:ydeltazero}) and (\ref{eq:ydeltarest}) that the gaps between consecutive agents in $\mathcal{C}$ are all greater than $1 - O(1/n)$. Hence the chain will freeze in time $5n/6 + O(1)$. This last deduction follows from unpublished results in \cite{HW}, more precisely from Theorem 1.1 and remarks at the outset of Section 3 in that paper. 

\par Given that the configuration $\bm{\mathcal{D}}_n$ freezes in time $\Theta(n^2)$, one can try to compute the constant factor accurately. We have not done so, but a combination of simulations and the Ockham's razor principle lead us to believe that the freezing time for $\bm{\mathcal{D}}_n$ is $(1+o(1))\frac{n^2}{4}$. The factor of $4=2^2$ comes from the fact that the numbers $\delta_{1, \, t}$ in (\ref{eq:ydeltarest}) seem to grow like $2 \sqrt{t}$. 
\par Note that, if our hypothesis is correct, then the freezing time of the configuration $\bm{\mathcal{D}}_n$ still grows more slowly, at least for $n \gg 0$, than that of the two-dimensional configuration $\bm{\mathcal{F}}_{3n+1}$. These are also two quite different types of configurations. It remains unclear what the right estimate for the function $f_{k}(n)$ might be in higher dimensions. 
\end{remark}

\section{Acknowledgements}
We thank Sascha Kurz and Anders Martinsson for helpful discussions. 



\end{document}